\titleformat{\section}[block]{\large\scshape\centering}{\thesection.}{1em}{} 
\titleformat{\subsection}[block]{\large}{\thesubsection.}{1em}{} 
\newcommand{\ie}{\emph{i.e.}\xspace}
\newcommand{\eg}{\emph{e.g.}\xspace}
\newcommand{\cf}{\emph{cf.}\xspace}
\newcommand{\adhoc}{\emph{ad hoc}\xspace}
\newcommand{\Viceversa}{\emph{Vice versa}\xspace}
\newcommand{\viceversa}{\emph{vice versa}\xspace}
\theoremstyle{plain}\newtheorem{theorem}{Theorem}[section]
\theoremstyle{plain}\newtheorem{corollary}[theorem]{Corollary}
\theoremstyle{plain}\newtheorem{proposition}[theorem]{Proposition}
\theoremstyle{plain}
\theoremstyle{plain}\newtheorem{definition}[theorem]{Definition}
\theoremstyle{remark}\newtheorem{remark}[theorem]{Remark}
\theoremstyle{remark}\newtheorem{example}[theorem]{Example}
\newcommand{\superimpose}[2]{%
  {\ooalign{$#1\@firstoftwo#2$\cr\hfil$#1\@secondoftwo#2$\hfil\cr}}}
	\newcommand{\footnoteref}[1]{%
		\protected@xdef\@thefnmark{\ref{#1}}\@footnotemark%
	}
\newcommand{\customlabel}[2]{%
	\protected@write \@auxout {}{\string \newlabel {#1}{{#2}{\thepage}{#2}{#1}{}} }%
	\hypertarget{#1}{#2}
}
\newcommand{\defeq}{\triangleq}
\newcommand{\defiff}{\stackrel{\triangle}{\iff}}
\DeclareMathOperator{\cod}{cod}
\newcommand{\bottom}{\perp}
\newcommand*{\boxwedge}{%
  \mathbin{%
    \mathpalette\@boxwedge{}%
  }%
}
\newcommand*{\@boxwedge}[2]{%
  \sbox0{$#1\boxplus\m@th$}%
  \dimen2=.5\dimexpr\wd0-\ht0-\dp0\relax 
  \dimen@=\dimexpr\ht0+\dp0\relax
  \def\lw{.06}
  \kern\dimen2 
  \tikz[
    line width=\lw\dimen@,
    line join=round,
    x=\dimen@,
    y=\dimen@,
  ]
  \draw
    (\lw/2,0) rectangle (1-\lw,1-\lw)
    (\lw,0) -- (.5,1-\lw-\lw/2) -- (1-\lw-\lw/2 ,0)
  ;%
  \kern\dimen2 
}
\newcommand*{\boxvee}{%
  \mathbin{%
    \mathpalette\@boxvee{}%
  }%
}
\newcommand*{\@boxvee}[2]{%
  \sbox0{$#1\boxplus\m@th$}%
  \dimen2=.5\dimexpr\wd0-\ht0-\dp0\relax 
  \dimen@=\dimexpr\ht0+\dp0\relax
  \def\lw{.06}
  \kern\dimen2 
  \tikz[
	rotate=180,
    line width=\lw\dimen@,
    line join=round,
    x=\dimen@,
    y=\dimen@,
  ]
  \draw
    (\lw/2,0) rectangle (1-\lw,1-\lw)
    (\lw,0) -- (.5,1-\lw-\lw/2) -- (1-\lw-\lw/2 ,0)
  ;%
  \kern\dimen2 
}
\newcommand{\nmap}[2]{{#1\mapsto#2}}
\newcommand{\supp}[1]{\llfloor{#1}\rrfloor}
\newcommand{\triangleleftright}{\mathbin{\triangleleft\mspace{-1.5mu}\triangleright}}
\title{\vspace{-15mm}\fontsize{18pt}{10pt}\selectfont\textbf{
Well-Stratified Linked Data\\[5pt]for Well-Behaved Data Citation
}}
\author{
	{\renewcommand\footnotemark{}\thanks{Authors are organised by alphabetical order.}}%
	\textsc{\large Dario De Nart}\thanks{
		Lab.~of Artificial Intelligence,
		Department of Mathematics and Computer Science, University of Udine, Italy
	}\\
	\normalsize \href{mailto:dario.denarti@uniud.it}{\tt dario.denart@uniud.it}
	\and
	\textsc{\large Dante Degl'Innocenti}\footnotemark[1]\\
	\normalsize \href{mailto:deglinnocenti.dante@spes.uniud.it}{\tt deglinnocenti.dante@spes.uniud.it}
	\and
	\textsc{\large Marco Peressotti}\thanks{
		Lab.~of Models and Applications of Distributed Systems,
		Department of Mathematics and Computer Science, University of Udine, Italy
	}\\
	\normalsize \href{mailto:marco.peressotti@uniud.it}{\tt marco.peressotti@uniud.it}
}
\date{}
\begin{document}

\maketitle 

\thispagestyle{fancy} 


\vspace{-5mm}
\begin{abstract}\noindent
In this paper we analyse the functional requirements of linked data citation and identify a minimal set of operations and primitives needed to realise such task. Citing linked data implies solving a series of data provenance issues and finding a way to identify data subsets. Those two tasks can be handled defining a simple type system inside data and verifying it with a type checker, which is significantly less complex than interpreting reified RDF statements and can be implemented in a non data invasive way. Finally we suggest that data citation should be handled outside of the data, possibly with an \adhoc language.
\end{abstract}


\begin{multicols}{2}

\section{Introduction}

\looseness=-1
Over the last years data has become a more and more critical asset both in research and in application. While there is a general agreement on the need for data citation to ensure research reproducibility and to facilitate data reuse, the research community is still debating how to concretely realise it. 
Citing data is not a trivial task since it has a few notable differences from citing literature: data evolve over time, data availability might change over time, only a subset of data might be relevant, and on top of that the authorship of data is not always clear since it may be the result of an automated process (\eg sensor data), involve a large number of contributors (\eg crowdsourcing), or even be built on the top of other data (\eg inferring a taxonomy from a document corpus).
Levering on the insights provided by  \cite{altman2015introduction}, \cite{altman2013evolution}, \cite{silvello2015methodology}, and \cite{zhao2010provenance} we outline the following Data Citation functional requirements:
\begin{itemize}
\item \emph{Identification and Access}: Data Citation should provide a persistent, machine readable, and globally unique identifier for data; Moreover a reference to a persistent repository should also be provided to facilitate data access.
\item \emph{Credit and Attribution}: Data citation should facilitate giving credit and legal attribution to all contributors to the data. Such contributors might be humans as well as automated processes such as reasoners;
\item \emph{Evolution}: Data Citation should provide a reference to the exact version of the cited data, since data might change over time. This is a fundamental requirement for research reproducibility purposes.
\end{itemize}
An additional, non functional requirement, is efficiency: the data citation should lead to the data in practical time, which means fast enough for the purposes of data consumer applications. For instance a database query allows to access the data in practical time, while solving a complex set of logical clauses probably does not.

In the last years Linked Data has rapidly emerged as the preferred format for publishing and sharing structured data, creating a vast network of interlinked datasets \cite{heath2011linked}. 
However the open nature of the format makes data provenance hard to track, moreover the RDF Recommendation does not provide a clear mechanism for expressing meta-information about RDF documents. Semantic Web technologies such as OWL, RDF, and RDFS leverage upon description logic and first order logic and it is well known that an incautious usage of their primitives may lead to non decidable sets of conditions \cite{horrocks2003shiq}. 

With respect to the requirements of a good data citation expressed above, the Semantic Web community has proposed a number of solutions to the data provenance problem which addresses the problem of assessing the authorship of data. Methods for partitioning RDF graphs have been proposed as well and also version identification and storage of RDF data have already been discussed. However  most of those solutions imply the embedding of meta-information inside RDF data.
This practice tends to make data cumbersome and the usage of reification \cite{hayes2004rdf} to realise tasks such as generating data subsets may lead to a combinatorial explosion of triples.

In this paper we discuss a simple framework to satisfy data citation requirements levering on the stratification of linked data, which basically means providing a separation between proper data and meta-information. Such separation can be effectively guaranteed with the usage of a simple type system allowing programs such as reasoners to discriminate in an efficient way.
We would also like to show that the fact that Linked Data technologies such as RDF and OWL are powerful enough to let you seamlessly represent and embed meta-information inside the data does not mean that you really \emph{should}.

\paragraph{Synopsis}
\Cref{sec:related} complements the introduction with a concise survey of related work and pointers for the interested reader. \Cref{sec:formalisation} recalls an established formalisation of meta-information over linked data that abstracts from implementation dictated details and conventions. This formalisation is completed with a simple yet expressive algebra and an abstract notion of reasoner which offer the formal context for developing the paper main contribution. Said contribution is introduced in \Cref{sec:coherent} and is a notion of \emph{coherence for (meta)information} capable of characterising properties of information organisation. This result is augmented with the notion of \emph{coherent reasoner} \ie an abstract characterisation of reasoners that can operate on coherent data while preserving such cornerstone property. Finally, we investigate the cost of verifying and preserving coherency:
\begin{itemize}
	\item
	In \Cref{sec:ws-with-types} we introduce a type system that allows us to reduce coherence checking to type checking; type inference can be applied to derive type-annotations in order to speed-up future checks.
	\item
	In \Cref{sec:ws-with-graphs} we reduce coherence checking to suitable graph problems thus deriving an algorithm that can asses whether a given data store is coherent during a single read \ie in linear time. 
 	\item 
	This verification algorithm proceeds incrementally; we show how, given a coherent data store, the algorithm can check whether an operation preserves coherency. Remarkably, this on-the-fly check has negligible costs: linear to the number of entries created, deleted or updated.
\end{itemize}
Each subsection is completed with a short \emph{takeaway message} paragraph containing general remarks and intuitions aimed at rendering the technical results more accessible.
In \Cref{sec:language} we discuss how these results and notions may be translated into practice: in particular, we envision a new modelling layer with richer language support on top of existing technologies such as RDF.
Final remarks are provided in \Cref{sec:conclusions}.

\section{Related Work}
\label{sec:related}

Data citation has already been explored by the Semantic Web community and it significantly overlaps with the problem of assessing data provenance since determining the authorship of data is vital for citation purposes and both tasks need meta-information over data.
Provenance has already been widely discussed by the Semantic Web community leveraging on the insights provided by the Database community \cite{buneman2001and}.
Provenance information can be represented exploiting two approaches: the annotation approach and the inversion approach \cite{omitola2010provenance}. In the first approach all meta-information is explicitly stated, while in the latter is computed when needed in a lazy fashion which requires external resources containing the information upon which provenance is entailed to be constantly available. The annotation approach is favoured since it provides richer information and allows data to be self-contained; several vocabularies have been proposed to describe meta-information over linked data such as \emph{VoID} (Vocabulary of Interlinked Datasets) \cite{alexander2011describing}, that offers a rich language for describing Semantic Web resources built on top of well known and widely used ontologies such as foaf\footnote{\url{http://xmlns.com/foaf/spec/}} and Dublincore\footnote{\url{http://dublincore.org/documents/dcmi-terms/}}, and \emph{PROV Ontology} (PROV-O)\footnote{\url{http://www.w3.org/TR/prov-o/}}, which is the lightweight ontology for provenance data standardized by the W3C Provenance Working Group.
Regardless of the vocabulary used, adopting the annotation approach will result in producing a lot of meta-information which might exceed the actual data in size: provenance data in particular increases exponentially with respect to its depth \cite{simmhan2005survey}.
For more information about the problem of data provenance, we reference the curious reader to \cite{anam2015linked}.
The state of the art technique for embedding meta-information in RDF, is reification \cite{zhao2010provenance} which consists in assigning a URI to a RDF triple by expressing it as an \emph{rdf:Statement} object. 
Recently the RDF 1.1 Recommendation \cite{world2014rdf} introduced the so called ``RDF Quad Semantic'' which consists in adding a fourth element to RDF statements which should refer to the name of the graph which the triple belongs to. The actual semantic of the fourth element however is only hinted, leaving room for interpretation and therefore allowing semantics tailored to fit application needs.
In \cite{silvello2015methodology} is presented a methodology for citing linked data exploiting the quad semantics: the fourth element is used as identifier for RDF predicates allowing the definition of data subsets. Other usages of the fourth element include specification of a time frame, uncertainty marker, and provenance information container \cite{carroll2005named}.
Finally, the idea of using a type system to ease the fruition of semantic resources is not new to the Semantic Web community: the authors of\cite{ciobanu2014minimal} propose a type system to facilitate programmatic access to RDF resources.

\section{Formalisation}
\label{sec:formalisation}
In this section we provide a uniform formalisation of meta-information over linked data and of the related operations. Our formalisation abstracts over implementation details (such as how triples are extended with a fourth element) retaining all relevant information.

\subsection{Families of Named Graphs}
\looseness=-1
A \emph{Named Graph} (herein NG) is a labelled set of triples and may consist in a single labelled statement or in a larger set including multiple statements. Usually in RDF data triples are labelled using \emph{reification}, however to the extents of our discussion how the triples are labelled is irrelevant. Following the formalisation proposed in \cite{carroll2005named} we define a family of NGs as a 5-tuple $\langle N,V,U,B,L \rangle$ where:
\begin{compactenum}[(a)]
\item $U$ is a set of IRIs;
\item $L$ is a set of literals;
\item $B$ is a set of blank nodes;
\item $V$ is the union of the pairwise disjoint $U$, $L$, and $B$;
\item $N$ is a set of assignments $u \mapsto (v,u',v')$ 
mapping each $u \in U$ to at most one triple $(v,u',v') \in V \times U \times V$.
\end{compactenum}
Equivalently, the set $N$ can be read as a partial function 
$n\colon U \rightharpoonup V \times U \times V$ called 
\emph{naming function} of the NG family.
The set $V$ is called \emph{vocabulary} of the NG family for it defines all the
IRIs, literals etc.~appearing in the family.

Note how every pair in $N$ consists in a label (the first element) and a non void RDF graph (the second element), thus is an NG. 
It is important to stress how the above formalisation 
holds regardless of the actual technique employed to 
associate an identifier to a named graph.

Intuitively, assigning an IRI to a triple puts that IRI in the r\^ole of
\emph{meta-information} w.r.t.~that triple whence thought as
\emph{information}. Note that the separation between information and meta-information is not absolute but \emph{relative} to the context \ie the level where the reasoning happens. For an example, consider the RDF snipped:
\begin{Verbatim}[
	tabsize=3, 
	gobble=1,
	xleftmargin=3ex,
	commandchars=\\\{\},
	codes={\catcode`$=3\catcode`^=7\catcode`_=8}]
	$x$ type      statement
	$x$ subject   $y$
	$x$ predicate $b$
	$x$ object    $c$
	$y$ type      statement
	$y$ subject   $a$
	$y$ predicate $b$
	$y$ object    $c$	
\end{Verbatim}
Accordingly to the reification semantics, here $x$ is assigned
to triple $(y,b,c)$ and $y$ to the triple $(a,b,c)$ hence,
can be seen as an NG family whose $N$ is corresponds to the assignments:
\[
	\nmap{x}{(y,b,c)}\qquad\text{and}\qquad
	\nmap{y}{(a,b,c)}\text{.}
\]
Clearly, $y$ plays the r\^ole of meta-information with respect to
the triple $(a,b,c)$ and
$x$ plays the r\^ole of meta-information about $(y,b,c)$
whence $(a,b,c)$.
W
\begin{remark}[Blank nodes]
	\label{rem:blank-nodes}
	Since the presence of blank nodes arises several non trivial problems for the purposes of merging and comparing linked data, the last W3C recommendation suggests the replacements of blank nodes with IRIs\footnote{\url{http://www.w3.org/TR/rdf11-concepts/\#section-blank-nodes}} when data references are expected. Linked data including blank nodes are still common on the Web, but one can always assume the existence a renaming function $r\colon B \rightharpoonup U$ assigning unique IRIs to blank nodes.
\end{remark}

\looseness=-1
Above we recalled NG families as introduced in \cite{carroll2005named}
however, because of Remark~\ref{rem:blank-nodes},
we could equivalently characterise them by their associated
naming function alone. 
Clearly, every 5-tuple $\langle N,V,U,B,L \rangle$ 
describing a family of NGs uniquely induces a naming function $n$
but the opposite is not true in general. This can be traced down to
the later lacking information the separation of $V\setminus U$ into
blank nodes and literals. As a consequence of Remark~\ref{rem:blank-nodes}
we can safely assume $B = \emptyset$ and hence recover $L$ as $V\setminus U$.
Formally, for a naming function $n\colon U \rightharpoonup V \times U \times V$,
define the associated NG family as follows:
\begin{enumerate}[(a)]
	\item
		$N$ is the function graph of $n$ \ie 
		\[N \defeq \{(u,(v,u',v')) \mid n(u) = (v,u',v')\}\text{;}\]
	\item
		$V$ is the first or third projection of $n$ codomain \ie 
		\[V \defeq \pi_1(\cod(n)) = \pi_3(\cod(n))\text{;}\]
	\item
		$U$ is the second projection of $n$ codomain \ie 
		\[U \defeq \pi_2(\cod(n))\text{;}\]
	\item
		$L$ is the difference $V \setminus U$;
	\item
		$B$ is empty.
\end{enumerate}
We use the two presentations interchangeably.

\paragraph{Takeaway message}
Along the lines of  \cite{carroll2005named}, we described an abstract 
formalisation encompassing RDF data triples labelled
using reification, among other equivalent representations.
The main reason behind this effort is to have a disciplined
representation that:
\begin{compactenum}[(a)]
\item 
	avoids the use of reification while retaining its 
	expressive power;
\item
	hides conventions and idiosyncrasies due to 
	implementation details;
\item
	clearly separates information from meta-information.
\end{compactenum}

\subsection{A simple algebra for NG families}
\label{sec:ng-algebra}
Families of named graphs can be organised into the partial 
order\footnote{%
	In general, even if we fix the vocabulary, NG families 
	may form a proper class. Because of the scope of this work 
	we shall avoid the technicality of partially ordered classes.
}
defined as:
\[
	n \sqsubseteq n' \defiff n(u) = (a,b,c) \implies n'(u) = (a,b,c)\text{.}
\]
Intuitively, $n \sqsubseteq n'$ means that $n'$ has more data than $n$ but
does not carry any implication on the semantic of such information since,
for instance, the former may contain semantically inconsistent information
not held by the latter.
We say that $n'$ \emph{is an extension of} $n$ whenever $n \sqsubseteq n'$.
Note that $n$ and $n'$ are not required to have the very same
set of IRIs and literals (hence vocabulary).
In fact, for any  pair of NG families $n\colon U \rightharpoonup V \times U \times V$ and
$n'\colon U' \rightharpoonup V' \times U' \times V'$,
the only information about $U$, $U'$, $V$, and $V'$
we can infer from $n \sqsubseteq n'$ is that
$U'$ and $V'$ overlap with $U$ and $V$ where $n$ is defined. Formally:
\[
	n(u) = (a,b,c) \implies u, b \in U' \land a,c \in V'\text{.}
\]
Two families are equivalent iff they extend each other:
\[
	n \equiv n' \defiff n \sqsubseteq n' \land n \sqsupseteq n'\text{.}
\]
\looseness=-1
Clearly, $\equiv$ is an equivalence relation and all of its equivalence classes
have a canonical representative family \ie the unique and minimal element
in the class. Intuitively, to obtain such family we only need to start with
any family in the equivalence class and trim its vocabulary and set of IRIs
by removing any element that does not appear in the naming function range or image
\ie remove everything but $u$, $a$, $b$, $c$ such that $n(u) = (a,b,c)$.
Empty families form an equivalence class $\varnothing$ represented by 
the unique naming function for the empty vocabulary.
Hereafter, we shall not distinguish families in the same equivalence class, unless
otherwise stated.

A NG family is called \emph{atomic} whenever it assigns
exactly one name \ie a naming function $n$ that
defined exactly on one element of its domain.
We shall denote atomic families by their only assignment:
\[
		(\nmap{x}{(a,b,c)})(u) \defeq 
		\begin{cases}
			(a,b,c) & \text{if } x = u \\
			\bottom & \text{otherwise}
		\end{cases}
\]
which corresponds to the RDF triple $a, b, c$ plus the fourth element $x$ that identifies the named graph. 

From the order-theoretic perspective,
atomic NG families are atoms for the order $\sqsubseteq$.
In fact, $n$ is atomic iff:
\[
	m \sqsubseteq n \implies m \equiv n \lor m \equiv \varnothing
	\text{.}
\]
For the sake of simplicity, we shall abbreviate 
$\nmap{x}{(a,b,c)}$ as $(a,b,c)$ when the particular choice of 
$x$ is irrelevant and confusion seems unlikely; we still 
assume $x$ to be implicitly unique in the context of its use.

In general, any non-empty set $S$ of NG families admits a minimal
element $\sqcap S$ given by the intersection
of all the families in $S$ or, using the naming function
presentation, to:
\[
	(\sqcap S)(u) = \begin{cases}
		(a,b,c) & \text{if } \forall n \in S\ n(u) = (a,b,c)\\
		\bottom & \text{otherwise}
	\end{cases}
\]
from which it is immediate to see that ${\sqcap S} \sqsubseteq n$
for any $n \in S$.
We may denote $\sqcap\{n_1,\dots,n_k\}$ by $n_1 \sqcap n_2 \sqcap \dots \sqcap n_k$ when confusion seems unlikely.

It could be tempting to dualise $\sqcap$ defining maximal elements as
follows:
\[
	(\sqcup S)(u) = \begin{cases}
		(a,b,c) & \text{if } \exists n \in S\ n(u) = (a,b,c)\\
		\bottom & \text{otherwise}
	\end{cases}
\]
however, note that this represents the merging of heterogeneous data
and may not yield an NG-family, hence is
not a total operation. For instance, consider any set containing
some $n$ and $n'$ assigning some $u$ to different triples
\eg: $n(u) = (a,b,c)$ and $n(u) = (a',b,c)$ where
$a\neq a'$.
The operation can be turned into a total one 
without altering the definition of NG family by adopting some resolution policy for conflicting overlaps:
discarding both, either, introducing suitable renaming, or more
sophisticated techniques developed under the name of 
\emph{ontology alignment}\footnote{%
    Ontology alignment is a vast and debated topic that has been
    formalised in several ways, we address the curious reader to
    \cite{zimmermann2006formalizing}.
}
a complex task whose formalisation exceeds the scope of this work.
In the remaining of this subsection we show how the aforementioned
simple conflict resolution policies yield ``surrogate'' operators for 
$\sqcup$ corresponding to specific and well-known operations.

Before defining these operations let us introduce some
auxiliary definitions and notation.
For an NG family $n\colon U \rightharpoonup V \times U \times V$,
its \emph{support} is the subset $\supp{n}$ of $U$ whose elements
are assigned to some graph by $n$:
\[
    \supp{n} \defeq \{u \mid n(u) \text{ is defined}\}
    \text{.}
\]
For a pair of families $n_1$ and $n_2$, their 
\emph{conflict set} $n_1 \lightning n_2$ 
is the set of IRIs being assigned to different graphs 
by $n_1$ and $n_2$, \ie:
\[
	n_1 \lightning n_2 \defeq 
	\{u \mid  
		\supp{n_1} \cap \supp{n_2} \land 
		n_1(u) \neq n_2(u)
	\}
	\text{.}
\]
For a family $n$, an injective map $\sigma$ from its vocabulary
defines a renamed family $n[\sigma]$:
\[
	n[\sigma](u) = 
	\begin{cases}
		(\sigma(a),\sigma(b),\sigma(c))  & \text{if } n(u') = (a,b,c) \land \sigma(u') = u\\
		\bottom & \text{otherwise} 
	\end{cases}
\]
We adopt the convention that any renaming $\sigma\colon V \to V'$
implicitly extends to a renaming on any superset of $V$ that
acts as $\sigma$ on $V$ and as the identity elsewhere.

The first ``surrogate'' for $n_1 \sqcup n_2$
resolves conflicts by ignoring
any conflicting information from $n_2$. Formally:
\[ 
	(n_1 \triangleright n_2)(u) \defeq
	\begin{cases}
		n_1(u) & \text{if } u \in \supp{n_1}\\
		n_2(u) & \text{if } u \in \supp{n_2}\setminus\supp{n_1}\\
		\bottom & \text{otherwise}
	\end{cases}
\]
We call this operation a surrogate for binary joins
since it is total and agrees with joins whenever
they are defined:
\[
	n_1 \triangleright n_2 = n_1 \sqcup n_2\text{.}
\]

The operation $\triangleright$ is associative:
\[
    n_1 \triangleright (n_2 \triangleright n_3)
    = (n_1 \triangleright n_2) \triangleright n_3
    \text{,}
\]
has the empty family as a left and right unit:
\[
    n \triangleright \varnothing = n = \varnothing \triangleright n
	\text{,}
\]
is idempotent:
\[
    n \triangleright n = n
    \text{,}
\]
and hence defines an idempotent monoid of NG families.
In general, $\triangleright$ is not commutative:
it is easy to check that commutativity does not hold 
unless $n_1$ and $n_2$ agree wherever both are 
defined \ie the set of conflicts 
$n_1 \lightning n_2$ is empty.
The operation $\triangleright$ is monotonic:
\begin{gather*}
    n_1 \sqsubseteq n_2
    \implies
	n_1 \triangleright n_3
	\sqsubseteq
	n_2 \triangleright n_3
	\\
    n_1 \sqsubseteq n_2
    \implies
	n_3 \triangleright n_1
	\sqsubseteq
	n_3 \triangleright n_2
\end{gather*}
and distributes over meets:
\[
    n_1 \sqcap (n_2 \triangleright n_3)
    = 
    (n_1 \sqcap n_2) \triangleright (n_1 \sqcap n_3)
    \text{.}
\]

The second ``surrogate'' for $n_1 \sqcup n_2$
uses an approach symmetric to $\triangleright$
by discarding conflicting data from its first operand.
Because of the symmetry, it will be denoted as $\triangleleft$
and presented simply mirroring $\triangleright$:
\[ 
	n_1 \triangleleft n_2 \defeq n_2 \triangleright n_1
	\text{.}
\]
Like $\triangleright$, this operation yields an idempotent
monoid over NG families; moreover, both operations share 
the same unit and distribute over each other.

A third option is to resolve conflicts by discarding every data 
assigned to IRIs in the conflict
\[ 
	(n_1 \triangleleftright n_2)(u) =
		\begin{cases}
		n_1(u) & \text{if } u \in \supp{n_1}\setminus (n_1 \lightning n_2)\\
		n_2(u) & \text{if } u \in \supp{n_2}\setminus (n_1 \lightning n_2)\\
		\bottom & \text{otherwise}
	\end{cases}
\]
Although this operation may seem more drastic than $\triangleright$
and $\triangleleft$, $\triangleleftright$ treats its operands equally and
coincides with $\sqcup$ whenever the later is defined.
As a consequence, $\triangleleftright$ is commutative, associative,
idempotent and has a unit.

Note that the conflict set between $(n_1 \triangleleftright n_2)$
and $n_1$ or $n_2$ is always empty and hence the following
equations are well-defined and hold true:
\begin{gather*}
	n_1 \triangleright n_2 = (n_1 \triangleleftright n_2) \sqcup n_1
	\ 
	\\
	n_1 \triangleleft n_2 = (n_1 \triangleleftright n_2) \sqcup n_2
	\text{.}
\end{gather*}
\Viceversa, $\triangleleftright$ can be derived from
$\triangleright$ and $\triangleleft$:
\[ 
	n_1 \triangleleftright n_2 = (n_1 \triangleleft n_2) \sqcap (n_1 \triangleright n_2)
	\text{.}
\]

We introduce a surrogate for $n_1 \sqcup n_2$ that handles conflicts without
discarding any information from its operands by
renaming all conflicting assignments made by 
$n_1$ and $n_2$\footnote{
For those familiar with OWL2, this strategy bears significant similarity with \emph{punning} which is an implicit renaming of conflicting entities, \eg a class and an individual or an object property and a data-type property sharing the same IRI.
}. 
Intuitively, this may be though to
be implemented by ``doubling'' the conflict set as:
\[
	(n_1\lightning n_2)\times\{1,2\} =
	\{\langle u, i \rangle\mid u \in n_1\lightning n_2 \land i \in\{1,2\}  \}
\]
In general, we abstract from specific renaming policies
by assuming a pair of injective maps
$\sigma_1$ and $\sigma_2$ such that the
conflict set $(n_1[\sigma_1]\lightning n_2[\sigma_2])$ is empty.
Referring to the intuitive implementation described above,
each renaming $\sigma_i$, for $i$ in $\{1,2\}$, is defined as follows:
\[
	\sigma_i(u) = \begin{cases}
		\langle u, i\rangle & \text{if } u \in (n_1\lightning n_2)\\
		u & \text{otherwise}
	\end{cases}
\]
Then, the last operator is defined as
\[
	(n_1 \boxvee n_2) \defeq n_1[\sigma_1] \sqcup n_2[\sigma_2]
	\text{.}
\]
The binary join $n_1[\sigma_1]\sqcup n_2[\sigma_2]$ is always well-defined
since $(n_1[\sigma_1]\lightning n_2[\sigma_2])$ is empty by
assumption on the renaming maps $\sigma_1$ and $\sigma_2$
and $n_1[\sigma_1]\sqcup n_2[\sigma_2] = n_1 \sqcup n_2$
whenever $n_1 \lightning n_2 = \emptyset$.

We extend the conventions introduced for $\sqcap$ to the
operations $\triangleleft$, $\triangleright$, $\triangleleftright$, and $\boxvee$:
we shall use $\boxvee\{n_1,\dots,n_k\}$ for
$n_1 \boxvee n_2 \boxvee \dots \boxvee n_k$ and \viceversa.

Every NG family can be defined using the constants and operations described in this subsection.
More complex operations such as inferences are described in the remaining part of this section.

\paragraph{Takeaway message}
In this subsection we introduced a simple yet expressive algebra
for describing NG families. Although, more convenient constructs,
operations or sophisticated techniques are not part of this algebra,
we believe they can be easily implemented on top of it hence suggesting
the use of this algebra as a \emph{core} language for targeting
(via compilation) any chosen implementation of NG families such as
reified RDF \cite{hayes2004rdf} or graphical models like \cite{montanari:2010gs-graphs,milner:bigraphbook,Engels1995101}.

\subsection{Reasoners over NG families}
In this subsection we formalise the problems of provenance, subsetting, and
versioning in the setting of NG families as formalised above.
To this end, we introduce an abstract and general notion of \emph{abstract reasoner} subsuming any process (automatic or not) transforming NG families. 
At this level of abstraction we formalise the problem of
understanding whether 
\vspace{-2ex}\begin{quote}
	``$x$ in $n$ has been generated by $\gamma$''
\end{quote}\vspace{-2ex}
for an IRI $x$, a named graph family $n$ and reasoner $\gamma$
and then show how provenance, subsetting, and versioning 
are covered as instances of the above.

\begin{definition}
	\label{def:abstract-reasoner}
	An \emph{(abstract) reasoner} is a
	(partial) function over NG families.
\end{definition}

For a simple example, consider a reasoner
that expands a family $n$ by computing the 
transitive and reflexive closure of any
predicate $b$ that $n$ describes as being transitive 
or reflexive \eg by means of some graph 
$(b,\mathtt{predicate},\mathtt{transitive})$.
Such reasoner can be described
as assigning to any family $n$
the least family closed under the 
derivation rules:
\begin{gather*}
	\frac{n(x) = (a,b,c)}{n \vdash (a,b,c)}
	\\
	\frac{
		n \vdash (a,b,c)
		\quad
		n \vdash (c,b,d)
		\quad
		n \vdash (b,\mathtt{predicate},\mathtt{transitive})
	}{
		n \vdash (a,b,d)
	}
	\\
	\frac{
		n \vdash (b,\mathtt{predicate},\mathtt{reflexive})
	}{
		n \vdash (a,b,a)
	}
\end{gather*}
Hence $\gamma(n) = \{(a,b,c) \mid n \vdash (a,b,c)\}$.

Likewise, symmetry can be computed by means of:
\[
	\frac{
		n \vdash (b,\mathtt{predicate},\mathtt{symmetric})
		\quad
		n \vdash (a,b,c)
	}{
		n \vdash (c,b,a)
	}	
\]
and reversible predicates by means of:
\begin{gather*}
	\frac{
		n \vdash (b,\mathtt{reverse},\overline{b})
		\quad
		n \vdash (a,b,c)
	}{
		n \vdash (c,\overline{b},a)
	}	
	\\
	\frac{}{n \vdash (\mathtt{reverse},\mathtt{predicate},\mathtt{symmetric})}
\end{gather*}
These examples describe \emph{monotonic} reasoners since 
$n \sqsubseteq \gamma(n)$ for any family $n$;
however, an abstract reasoner may be non-monotonic as well: consider for instance a
human annotator performing a revision of an ontology, such an annotator is 
 likely to both add and withdraw triples from the knowledge base and still fits the definition
 of an abstract reasoner.
A simple example of situation where the withdrawal of a triple is needed to
keep data consistent is offered by the derivation rule:
\begin{gather*}
	\frac{
		n \not\vdash (a,\mathtt{after},b)
	}{
		n \vdash (a,\mathtt{first},\mathtt{event})
	}
\end{gather*}

All examples of reasoners described so far essentially work at the
triple level for they never really follow any IRI.
This kind of reasoner never crosses the boundary between 
information and meta-information but this is not true in
general. Actually, crossing such boundary is often necessary
when reasoning about the (meta)information stored in an NG family
as we introduce reasoners that handle operations over meta-information such as
tracking data authorship, extracting data subsets, and managing versions.
Before we delve into this topic, let us introduce
some auxiliary reasoners and definitions
that allow us to describe such reasoners as well.

For an abstract reasoner $\gamma$ and a family $n$, define the sets of created, updated, and deleted assignments as:
\begin{align*}
 C_\gamma(n) & \defeq \supp{\gamma(n)} \setminus \supp{n}\\ 
 U_\gamma(n) & \defeq \{x \mid x \in \supp{\gamma(n)}\cap\supp{n} \land 
 	n(x) \neq \gamma(n)(x) \}\\
 D_\gamma(n) & \defeq \supp{n}\setminus\supp{\gamma(n)}
\end{align*}
Given $n$ and $\gamma$, computing the above sets could be
prohibitively demanding even under the assumption of
NG families being finite.
In practice, changes are recorded by explicitly tagging
all affected assignments. Since this good practice is
not imposed by Definition~\ref{def:abstract-reasoner}
we introduce new reasoners that extend the output of
any given reasoner $\gamma$ with this tagging information.
\begin{align*}
 \Delta^C_\gamma(n) & \defeq \gamma(n) \boxvee \{(\gamma,\mathtt{new},x) \mid x \in C_\gamma(n) \}\\
 \Delta^U_\gamma(n) & \defeq \gamma(n) \boxvee \{(\gamma,\mathtt{upd},x) \mid x \in U_\gamma(n) \}\\ 
 \Delta^D_\gamma(n) & \defeq \gamma(n) \boxvee \{(\gamma,\mathtt{del},x) \mid x \in D_\gamma(n) \}
\end{align*}

Everything added by one of the above reasoners
to the output of $\gamma$ is meta-information with respect to said output.
As expected, this data enables reasoning on the evolution of the family itself.
For instance, consider the following simple inference that
reconstructs which reasoner used other reasoners
by looking whether they recorder their activity:
\begin{gather*}
	\frac{
		n \vdash (\gamma,\mathtt{new},y)
		\quad
		n(y) = (\delta,q,z)
		\quad
		p,q \in \{\mathtt{new},\mathtt{upd}\}
	}{
		n \vdash (\gamma,\mathtt{uses},\delta)
	}
\end{gather*}
\looseness=-1
Clearly, any reasoning on the evolution of a named family due to the action of reasoners (via the oversight of $\Delta$, for exposition convenience) inherently require that some references stored as named graphs are followed hence that the boundary between information and meta-information is crossed at some point. Levering on these definitions, one can easily describe reasoners that realise authorship attribution over data with different granularity, reasoners to label and extract data subsets and versions.

\paragraph{Takeaway message}
In this section, we unearthed the core issues arising from
reasoning about information and meta-information as well and provided
a framework for describing all operations needed to perform data citation.
Since our formal treatment has been carried at the abstract level of
named graph families, we have shown how these issues
are inherent to the problem and independent from implementation
details and specific techniques such as reification.

\section{Coherent (meta)information}
\label{sec:coherent}
In this section we characterise a class of NG families
called \emph{well-stratified}
with the fundamental property of stratifying meta-information
over information in a way that prevents any infinite chain of
``downward'' references where the direction is interpreted as
crossing the boundary between meta-information and information.
Since practical NG families (hence triple stores) contain only
a finite amount of explicit information, absence of such chains
corresponds to the absence of reference cycles like, for instance,
in the NG family:
\[
	\nmap{x}{(y, b, c)}
	\sqcup
	\nmap{y}{(x, b, c)}
	\text{.}
\]
We characterise a class of abstract reasoners,
called \emph{coherent}, that preserve well-stratification
of named graph families they operate on.
Finally, we introduce a decidable and efficient procedures for assessing the well-stratification
of an NG family and operations on them.

\subsection{Well-founded relations}
Before we define well-stratified NG families, let us recall some
auxiliary notions and notation.
A binary relation $R$ on a (non necessary finite) set $X$
is called \emph{well-founded} whenever every non-empty
subset $S$ of $X$ has a minimal element \ie there exists
$m \in S$ that is not related by $s \mathbin{R} m$ for $s \in S$.
This means that we can intuitively walk along $R$ going
from right to left for finitely many steps \ie we have to stop, eventually.
In fact, well-foundedness can be reformulated say that
$R$ contains no (countable) infinite descending chain
(\ie an infinite sequence $x_0,x_1,x_2,\dots$ such that 
$x_{n+1} \mathbin{R} x_{n}$).

\vspace{1ex}
\begin{example}
	The predecessor relation $\{(x,x+1) \mid x \in \mathbb{N}\}$ on the set 
	of natural numbers is well-founded.
	The prefix and suffix relations on the set $\Sigma^*$
	of finite words over the alphabet $\Sigma$
	is well-founded.
	Any acyclic relation on a finite set is (trivially) 
	well-founded.
	Point-wise and lexicographic extensions of well-founded relations
	are well-founded.
\end{example}

\looseness=-1
This kind of relations are common-place in mathematics and computer science since they provide the structure for several inductive and recursive principles and, with regard to this work aim, approaches for proving termination. Intuitively, the idea is to equip the state space of an algorithm with a well-founded relation and then show that each step of the algorithm travels such relation right to left (descends). By well-foundedness hypothesis, all descent paths are bound to terminate in a finite number of steps.

\vspace{-2ex}
\paragraph{Takeaway message}
Well-founded relations such as the successor relation over natural numbers are at the core of several techniques used to prove termination. Such techniques revolve around the idea of reducing the problem under scrutiny to walks along said well-founded relation and hence termination follows by the fact that any such (descending) walk cannot be infinite.

\subsection{Well-stratified NG families}
The intuitive desiderata of an NG family being free
from infinite paths descending along the (meta-)information chain is formally captured by the following definition.

\begin{definition}
    A family of named graphs is called \emph{well-stratified}
    whenever it comes equipped with a well-founded
    relation ${\prec}$ on its vocabulary s.t.~$n$
    descends along $\prec$ \ie 
	\[
		n(u) = (a,b,c) \implies 
		u \succ a \land  
		u \succ b\land
		u \succ c
		\text{.}
	\]
	The relation $\prec$ is called \emph{witness}
	for $n$.
\end{definition}
\vspace{-1ex}

Following any chain of assignments $\nmap{x}{a,b,c}$ described by a well-stratified NG family has to eventually terminate since each step corresponds to a step along $\prec$ which is well-founded by hypotheses.
Thus, any reasoner based on such visits is bound to terminate as long as it descends along $\prec$ and each internal step in its chain is decidable. Moreover, for a given NG family the length of these chains is known and bounded.

Operations described in Section~\ref{sec:ng-algebra}
preserve well-stratification under the assumption that 
all operands can share their witness $\prec$.

Abstract reasoners may easily break well-stratified stores. Intuitively most reasoning tasks and well-engineered human
annotation processes should be coherent, however
breaking the well-stratification of data is subtle and can
be achieved even with monotonic reasoning.
For instance, consider a set of triples where there 
exists a triple $(y, \mathtt{type}, \mathtt{statement})$
labelled with some IRI $x$ and an abstract reasoner 
$\gamma$ that adds a new triple $(x, \mathtt{type}, \mathtt{statement})$
labelled as $y$. 
This insertion is totally legit if we are using reification
but introduces a circularity in the chain of meta data since
the family now contains the following assignments:
\[
	\nmap{x}{(y, \mathtt{type}, \mathtt{statement})}
	\qquad
	\nmap{y}{(x, \mathtt{type}, \mathtt{statement})}
\]
and hence is no more well-stratified.
 
\begin{definition}
	An abstract reasoner is called \emph{well-behaved}
	whenever it preserves well-stratification.	
\end{definition}

Reasoners for provenance, subsetting, and versioning
are well-behaved for they cross the boundary between information
and meta-information only in one direction: descent.

In general, assessing whether a reasoner is well-behaved may not
be immediate especially since Definition~\ref{def:abstract-reasoner}
describes them as ``black boxes transforming NG families''.
There are several ways for describing classes of 
abstract reasoners with different degrees of expressiveness.
Covering all of them is out of the scope of this work and
indeed impossible\footnote{%
	We leave to the reader the exercise of formalising an human reasoner and prove it well-behaved.
}, still derivation rules are a presentation fit for many
reasoners (like the ones described so far) and well known across the computer science community.
This approach allows to quickly inspect the ``internals of the box''
and statically prove a reasoner well-behaved.
Moreover, it is possible to define reasoners that are
``well-behaved by design''  by imposing suitable
syntactic constraints on these derivation rules akin
to rule formats developed in the field of concurrency theory
(\cf \cite{mp:2014ultras-gsos,mp:2016ultras-gsos,ks2013:w-s-gsos}).
Albeit interesting, this topic cannot be fully developed
in the context of this work.

\paragraph{Takeaway message}
In this subsection we characterised a class of NG families that stratify meta-information over information without creating incoherences such as loops. In general, reasoners may easily break this cornerstone property and treating them as black boxes prevents any practical attempt to statically check whether they really break well-stratification. However, with access to enough information about the internal working of a reasoner (\eg its description in terms of derivation rules), established formal methods can be applied to prove it well-behaved; even develop languages for creating reasoners guaranteed to be well-behaved. Remarkably, reasoners for provenance, subsetting, and versioning admit well-behaved implementations.

\subsection{Assessing well-stratification using types}
\label{sec:ws-with-types}
NG families share some similarities formal graphical languages
like bigraphs and hierarchical graphs. This observation suggest
to introduce a simple type system, along the line of 
\cite{Engels1995101}, with a special type whose 
inhabitants are exactly well-stratified NG families.
Then, to verify if a given family is well-stratified
it would suffice to check if it is well-typed.

For the aims of this work, we introduce a simple type
system whose only type $\checkmark$ is inhabited by
exactly well-stratified families. Judgements are of the form
\[\Gamma \vdash n \colon \checkmark\]
where $n\colon U \rightharpoonup V \times U \times V$ is
a family of named graphs and the stage $\Gamma$ is
a partial function from the vocabulary $V$ to a well-founded
structure. For instance, could map $V$ to the set of
natural numbers under the successor relation: 
$\Gamma\colon V \rightharpoonup \mathbb{N}$.

The proposed type system is composed by three typing rules:
\begin{gather*}
	\frac{
	}{
		\Gamma \vdash \varnothing\colon \checkmark
	}
	\\[5pt]
	\frac{
		\Gamma(x) > \Gamma(a)
		\quad
		\Gamma(x) > \Gamma(b)
		\quad
		\Gamma(x) > \Gamma(c)
	}{
		\Gamma \vdash \nmap{x}{(a,b,c)}\colon \checkmark
	}
	\\[5pt]
	\frac{
		\Gamma_1 \vdash n_1 \colon \checkmark
		\quad
		\Gamma_2 \vdash n_2 \colon \checkmark
		\quad
		\Gamma = \Gamma_1 \sqcup \Gamma_2
		\quad
		n = n_1 \sqcup n_2
	}{
		\Gamma \vdash n\colon \checkmark
	}
\end{gather*}
The first captures the fact that the empty family is always well-stratified.
The second ensures that $\Gamma$ describes relations on $V$ such that
the assignment $\nmap{x}{(a,b,c)}$ is well-stratified.
Finally, the third rule allows to break $n$ and $\Gamma$ reducing the
problem to smaller objects which can then be checked separately
(clearly, applying this rule with either $n_1$ or $n_2$ being $\varnothing$
is pointless).

Regardless of the structure used as codomain,
$\Gamma$ determines a class of relations
on $V$ that are well-founded
where $\Gamma$ is defined: for a relation 
$\prec$ on $V$ s.t.:
\[
	x \prec y \implies \Gamma(x) < \Gamma(y)
\]
the restriction of $\prec$ to $\supp{n}$
is clearly well-founded. This property is enough
to guarantee that any family $n$ such that
$\Gamma \vdash n \colon \checkmark$ (\ie is well-typed)
is well-founded. In fact, because of the above typing
rules, $\Gamma$ must be defined on every IRI and literal 
occurring in $n$ and hence $\prec$ as above is a witness
for $n$ being well-stratified.

We do not need to ``guess'' $\Gamma$. This function can be
obtained by applying the above typing judgements while considering
$\Gamma$ as an unknown collecting all the hypotheses on it
(\eg $\Gamma(x) > \Gamma(a)$ from the second rule)
in a set of constraints. 
Any partial function satisfying these constraints can be used
as $\Gamma$ to derive $\Gamma \vdash n \colon \checkmark$.
Although computing such solutions can be done pretty efficiently,
at this point it suffices to prove solution existence to
prove $n$ well-stratified.

In practice, type checkers may be helped by providing
typing annotations as separate meta-data, as primitives
of a specialised language for NG families, or just as
comments like in the following RDF snipped:
\begin{Verbatim}[
	tabsize=3, 
	gobble=1,
	xleftmargin=3ex,
	commandchars=\\\{\},
	codes={\catcode`$=3\catcode`^=7\catcode`_=8}]
	// $x$: 4; $y$: 2; $b$, $c$: 0
	$x$ type      statement
	$x$ subject   $y$
	$x$ predicate $b$
	$x$ object    $c$
\end{Verbatim}
This statically computed information can be used to
optimise reasoners since $\Gamma(x)$ provides an upped bound
to the length of meta-information/information steps starting from $x$.
As we show in Subsection~\ref{sec:ws-with-graphs}, the very same
information can be used to efficiently reject any operation that 
breaks well-stratification.

\paragraph{Takeaway message}
In this subsection we characterised well-stratified
NG families by means of a simple type system and showed
the type inference problem to be decidable. This approach
suggests to explore the use of more expressive type systems
and sortings in order to express/enforce richer properties
about NG families. Moreover, the connection between NG families
and formal graphical models suggests the possibility to
extend compositionality results such us those offered
by \emph{monoidal sortings} \cite{mp:br-tr13} to this settings. 

\subsection{Assessing well-stratification using graphs}
\label{sec:ws-with-graphs}

For a family $n$ whose support $\supp{n}$ is finite,
the only way to not be well-stratified is to
contain cycles of dependencies between information and
meta-information: the only way for a relation on a finite
set to not be well-founded is to contain cycles. 
Therefore, if we read assignments described by $n$
as arcs in a directed graph we can reduce the problem of 
checking if $n$ is well-stratified to checking if this 
``graph of dependencies'' is free from cycles.

\begin{definition}
	For an NG family $n$ its \emph{dependency graph} $G_n$
	is a graph with	$\supp{n}$
	and $\{(x,y_i) \mid 
				n(x) = (y_1,y_2,y_3)
				\land
				y_i \in \supp{n}
			\}$
	as its set of nodes and edges, respectively.
\end{definition}

For a family $n$ containing a finite amount of
\emph{explicit} meta-information (as in any
real-world scenario) well-stratification
can be checked with time cost linear to the
number of assignments described by $n$
(\ie the cardinality of the set $\supp{n}$).
\begin{proposition}
	For a family of named graphs $n$
	such that $\supp{n}$ is	finite,
	$n$ is well-stratified if and only if
	its dependency graph $G_n$ is a 
	directed acyclic graph.
\end{proposition}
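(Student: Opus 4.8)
The plan is to prove the two implications separately. I expect the forward implication (well-stratified $\Rightarrow G_n$ acyclic) to be immediate and not to need the finiteness hypothesis, while the converse will require a small explicit construction in which finiteness of $\supp{n}$ plays the decisive role.

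For the forward direction, suppose $n$ is well-stratified with witness $\prec$ and, towards a contradiction, that $G_n$ contains a directed cycle $x_0 \to x_1 \to \cdots \to x_{k-1} \to x_0$. By the definition of the dependency graph, every edge $x \to y$ of $G_n$ records that $y$ occurs in the triple $n(x)$, hence $x \succ y$ because $\prec$ is a witness. Applying this to each edge of the cycle shows that the non-empty set $\{x_0,\dots,x_{k-1}\}$ has no $\prec$-minimal element — for each $x_i$ its cyclic successor lies strictly below it — contradicting well-foundedness of $\prec$. Since $G_n$ is by construction a directed graph, acyclicity makes it a DAG.

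For the converse, assume $\supp{n}$ is finite and $G_n$ is a DAG; I would produce a witness $\prec$ explicitly, which amounts to exhibiting a stage $\Gamma$ exactly as in the type system of Subsection~\ref{sec:ws-with-types}. Because $\supp{n}$ is finite and $G_n$ is acyclic, every directed path of $G_n$ has finite length, so I can set $\Gamma(x) = 1 + \ell(x)$ for $x \in \supp{n}$, where $\ell(x)$ is the length of the longest $G_n$-path starting at $x$ (defined by recursion along $G_n$), and $\Gamma(a) = 0$ for the remaining vocabulary elements $a \in V \setminus \supp{n}$ (the literals and the IRIs occurring only in value positions). Then every edge of $G_n$ strictly decreases $\Gamma$, and every node of $G_n$ has $\Gamma$-value at least $1$, strictly above any non-node. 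Taking $a \prec b \defiff \Gamma(a) < \Gamma(b)$ yields a relation that is well-founded — it is the $\Gamma$-preimage of $(\mathbb{N},<)$, and a minimum of $\Gamma(S)$ exhibits a $\prec$-minimal element of any non-empty $S \subseteq V$ — and it is a witness for $n$: if $n(u) = (a,b,c)$ then $u \in \supp{n}$ and, for each coordinate $e \in \{a,b,c\}$, $\Gamma(u) > \Gamma(e)$, via the edge $u \to e$ of $G_n$ when $e \in \supp{n}$ and via $\Gamma(u) \geq 1 > 0 = \Gamma(e)$ otherwise; hence $u \succ a$, $u \succ b$, $u \succ c$ and $n$ is well-stratified.

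The step carrying all the weight, and the only place finiteness of $\supp{n}$ is genuinely used, is the claim that $\Gamma$ is well-defined and $\mathbb{N}$-valued on $\supp{n}$: an infinite acyclic dependency graph (for instance one realising an infinite descending chain) is still a DAG yet admits no such finite stratification, so the equivalence really does break without the hypothesis. The remaining verifications — that $\prec$ descends on all three triple coordinates and that vocabulary elements outside $\supp{n}$ may be treated uniformly as $\prec$-minimal — are routine bookkeeping.
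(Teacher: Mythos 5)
Your proof is correct and rests on the same fact the paper's one-line argument invokes --- that a relation on a finite set is well-founded iff it is acyclic --- but where the paper disposes of this in a single sentence, you spell out both directions, including an explicit witness built from longest-path lengths whose rank function $\Gamma$ is exactly a stage for the type system of \Cref{sec:ws-with-types}. Nothing is missing: your observations that only the converse uses finiteness of $\supp{n}$ and that an infinite descending chain gives an acyclic yet non-well-stratified family are both accurate.
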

\begin{proof}
	By hypothesis on $n$, $G_n$ has finitely many edges and nodes
	hence the only way for it to contain an infinite path
	is to have a directed cycle.
\end{proof}

Absence of directed cycles reduces to the existence of
a topological sorting which can be easily computed
in polynomial time with Tarjan's algorithm.
Intuitively, this amounts to a depth first visit of 
the dependency graph $G_n$: a graph whose nodes have at most three outgoing
edges, hence the time complexity actually is linear.

\begin{corollary}
	Well-stratification can be checked with a time cost
	linear in the size of $\supp{n}$.
\end{corollary}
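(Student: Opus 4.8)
The plan is to chain the preceding Proposition together with a standard linear-time graph traversal. First I would invoke the Proposition: since $\supp{n}$ is finite, $n$ is well-stratified if and only if its dependency graph $G_n$ is acyclic, so it suffices to exhibit a procedure deciding acyclicity of $G_n$ in time $O(|\supp{n}|)$ and to account for the cost of producing $G_n$ from $n$.

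Second, I would bound the size of $G_n$. By construction its node set is exactly $\supp{n}$, and each assignment $n(x) = (y_1,y_2,y_3)$ contributes at most three outgoing edges from $x$; hence $|E(G_n)| \le 3\,|\supp{n}|$ and therefore $|V(G_n)| + |E(G_n)| \le 4\,|\supp{n}|$. Building $G_n$ from $n$ is a single pass over the assignments: for each $x \in \supp{n}$, read $n(x) = (y_1,y_2,y_3)$ and, for each component $y_i$, test whether $y_i \in \supp{n}$ and, if so, record the edge $(x,y_i)$. Storing $\supp{n}$ in a dictionary (populated by a preliminary pass over $n$) makes each membership test $O(1)$, so the whole construction costs $O(|\supp{n}|)$.

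Third, I would run any $O(|V| + |E|)$ cycle-detection algorithm on $G_n$ --- a depth-first traversal that reports a back edge, equivalently Tarjan's strongly-connected-components / topological-sorting algorithm --- returning ``well-stratified'' precisely when no directed cycle is found (equivalently, when a topological order exists). Its running time is $O(|V(G_n)| + |E(G_n)|) = O(|\supp{n}|)$ by the bound above, and correctness is immediate from the Proposition. Composing the two linear phases gives the claimed bound.

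The only real subtlety, and the point I would be careful to make explicit, is the choice of complexity measure: linearity is in $|\supp{n}|$, the number of assignments actually described by $n$, not in the (possibly much larger) size of the vocabulary $V$ or of the ambient sets $U$ and $L$; this is what makes the statement tight and what matches the ``single read'' intuition announced in the introduction. One should also fix the cost model for the elementary operations on IRIs and literals (equality, hashing) as unit cost --- otherwise an ordered dictionary introduces a logarithmic factor --- a routine assumption, but worth stating so that ``linear'' is unambiguous.
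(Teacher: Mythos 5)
Your proposal is correct and follows essentially the same route as the paper: reduce well-stratification to acyclicity of the dependency graph $G_n$ via the Proposition, observe that each node has at most three outgoing edges so $|V|+|E| = O(|\supp{n}|)$, and decide acyclicity with a depth-first/Tarjan-style topological-sort traversal. Your additional care about the cost of constructing $G_n$ and the unit-cost model for IRI comparisons only makes explicit what the paper leaves implicit.
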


Because of the size that real-world triple stores can reach,
computing a topological sorting of $G_n$ from scratch every
time an operation on $n$ is performed can be a daunting task.
In the remaining of this section we describe how to efficiently
and precisely reject all changes that will break well-stratification.

In Section~\ref{sec:ng-algebra} we described a core
algebra for NG families highlighting that complex 
transformations basically reduce to insertions and deletions
of name assignments (updates are modelled as atomic pairs of 
deletions and insertions, as usual). Clearly, deleting a named
assignment preserves well stratification and hence only insertions
need to be checked before being carried out.

A way to curb this cost is to cache the information about
on the topological sorting in a map from $\supp{n}$ to some
linear order relation on a dense but limited set such us
the rational part of the interval $[0,1)$. This order relation
is not well-founded but it is acyclic and hence any restriction
to a finite subset of $[0,1)$ is well-founded. Moreover, being
dense, we can always ``make room'' for newly inserted (meta)information.

In the following let $n$ be a well-stratified NG family and
let $m$ be a partial map from $U$ to the subset of rational numbers:
\[
	\{l \cdot 2^{-k} \mid k \in \mathbb{N} \land 0 \leq l < k \}
	\text{.}
\]
Under the assumption that we start from an empty family $n$
and an empty map $m$, the algorithm ensures that after
an arbitrary sequence of insertions
\begin{itemize}
\item
	$m$ is defined exactly on every element occurring in $n$
	as information or meta-information;
\item 
	the natural order on $\mathbb{Q}$ defines a well-founded relation on all 
	elements where $m$ is defined;
\item
	an operation is rejected if, and only if, it does not preserve well-stratification
	of $n$.
\end{itemize}
Note that the first two points imply well-stratification.

Consider the insertion of $\nmap{x}{(a,b,c)}$ in $n$.
Since $x \not\in \supp{n}$ we have to consider two main
scenarios: in the first $x$ does not occur in $n$ 
(\ie $m(x) = \bottom$) whereas in the second
$x$ occur in $n$ as information only ($m(x) \neq \bottom$ but $n(x) = \bottom$).

Assume $m(x) = \bottom$, we need to assign to $x$ a value above
those assigned to $a$, $b$, and $c$ but below everything that we already put
above these three piece of data:
\begin{align}
	\label{eq:new-x}
	m(x) & \defeq y + \frac{|y-z|}{2}\\
	\notag
	y & = \max\{m(a),m(b),m(c)\}\\
	\label{eq:new-z}
	z & = \inf\{w \mid y < w \land m(u) = w\}
\end{align}
Note that although the definition
of $z$ may seem a bit convoluted, it can be readily
implemented by means of an ordered set of values
from $m$: \eqref{eq:new-z} is exactly the first successor to $y$ in such structure and \eqref{eq:new-x} corresponds to an insertion right between $y$ and $z$.

Assume $m(x) \neq \bottom$, we have three sub-cases:
\begin{enumerate}[(a)]
\item 
	If $m(x) < \max\{m(a),m(b),m(c)\}$ then
	at least one of $a$, $b$, or $c$ 
	occurs in $n$ in the r\^ole of meta-information 
	w.r.t.~$x$ and thus $\nmap{x}{(a,b,c)}$ 
	is rejected and the algorithm terminates.
\item
	If $m(x) = \max\{m(a),m(b),m(c)\}$ 
	we need to promote $x$ ``pushing'' everything
	above it up and everything else down as in
	the first scenario; therefore,
	$m(x)$ is redefined using \eqref{eq:new-x}.
\item
	If $m(x) > \max\{m(a),m(b),m(c)\}$ then
	no further action is required.
\end{enumerate}

If the algorithm did not reject the operation,
then it can be safely performed. The only step
left is to assign the value $0$ to any
$d \in \{a,b,c\}$ for which $m$ is undefined.

Carrying out the procedure sketched above requires a constant
numbers of reads and writes of the map $m$ whose efficiency
depends on the implementation of choice but can be safely
assumed as negligible.

\paragraph{Takeaway message}
As shown in the previous subsections,
well-stratification is an useful property
for NG families thus, being able to efficiently
check 
\begin{itemize}
	\item
		if a family is well-stratified or
	\item
		if an operation preserves well-stratification		
\end{itemize}
is of vital importance. 
In this subsection we described how
these two questions can be answered with a cost that is
linear in the number of assignments (\ie the size of meta-information)
and constant, respectively.
These results are strictly related to those described in Subsection~\ref{sec:ws-with-types} since the
dependency graph $G_n$ induced by a family $n$ is
a graph representation of the constraints on
$\Gamma$ derived by applying type inference to $n$.
Therefore, we can read the above results as complementing
the type system we introduced with an implementation.

\begin{figure}[H]
 \center
  \includegraphics[width=0.76\columnwidth]{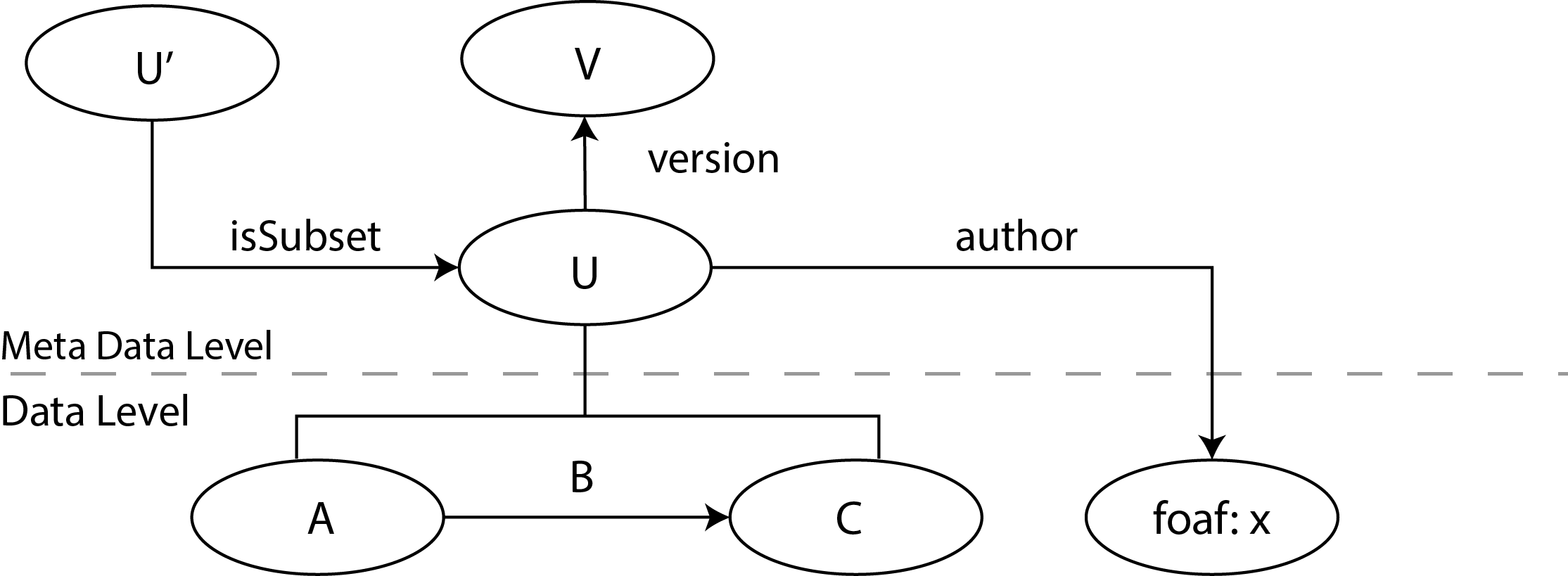}
  \caption{An example of well-stratified data.}
  \label{fig:dataex}
\end{figure}

\begin{figure}[H]
 \center
  \includegraphics[width=0.76\columnwidth]{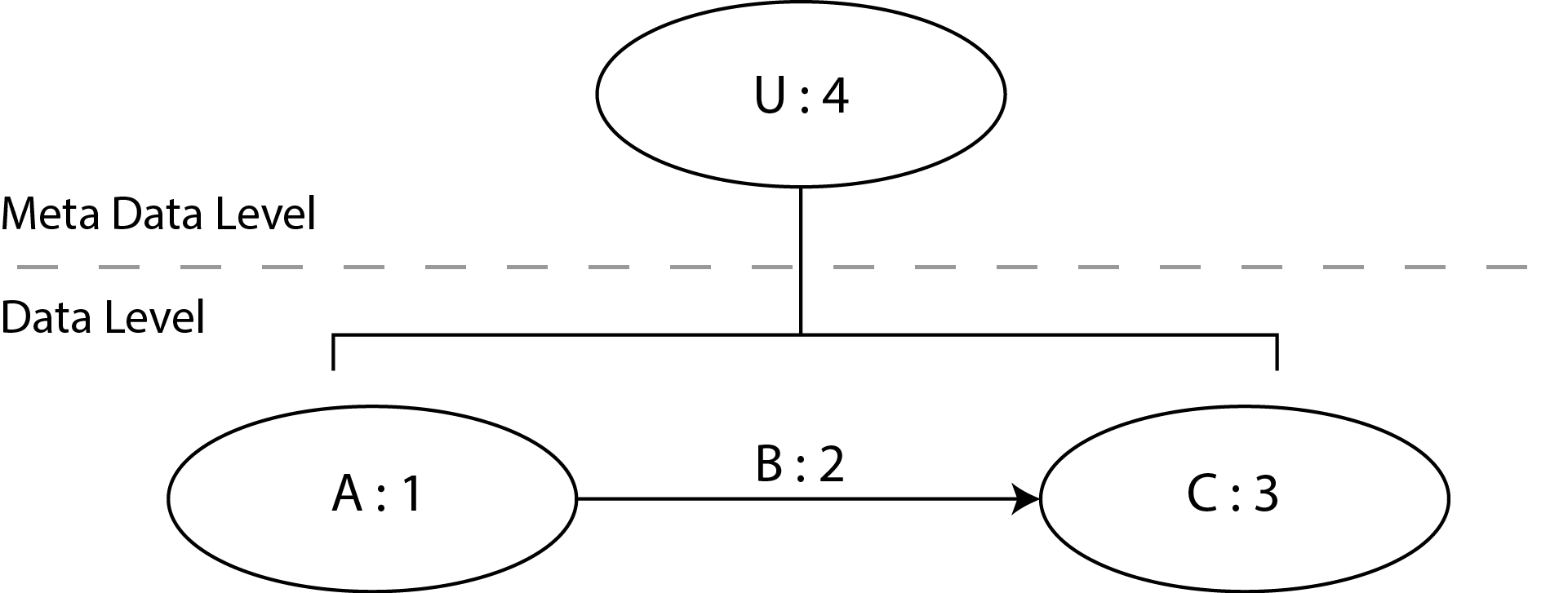}
  \caption{An example of meta-information levels assignment.}
  \label{fig:typeex}
\end{figure}

\section{Towards a Well-Stratified data language}
\label{sec:language}
In the previous sections we introduced the concept of well-stratified data, that is crucial to the practical realisation of data citation: as long as data is well-stratified resolving a data citation is always possible. 
Moreover we showed how assessing the well-stratification of a data set is linear with respect to the size of the data and can be done incrementally as new statements are added to the data set.
In this section we discuss how these notions may translate into practice.

Intuitively, for a data set to be well-stratified it means that it is always possible to draw a line separating information from meta-information while baring in mind that such separation is bound to be inherently relative to current datum. 
In order to illustrate this idea consider the simple example depicted in \Cref{fig:dataex}: the triple $(A, B, C)$ is the information, while its identifier $U$ and its related predicates represent the meta-information with respect to the datum considered \ie $(A,B,C)$. Intuitively $U$ is the IRI of the reified \emph{statement} that has as subject, object, and predicate respectively $A$, $C$, and $B$. Each IRI in the data set should belong to exactly one of these levels. Theoretically, there should be no upper bound to the number of meta-information levels since one might be interested in expressing statements over meta-information. An hypothetical reification, for instance, of $(U, \mathtt{version}, V)$ having as identifier $I$ would imply a further meta-information level containing only $I$.

\looseness=-1
This intuitive stratification can be seen as assigning to (meta-)information to levels that decreases as we move from meta-information to information. The lowest level contains all the data that data citation resolution should ultimately reach. Well-stratification not only ensures the existence of a lowest level but it also guarantees that wherever a reasoner starts unravelling the (meta-)information chain it will eventually reach said level. 

Indeed this is the approach described in \Cref{sec:coherent} and embodied by the function $\Gamma\colon V \rightharpoonup \mathbb{N}$ used by the type system introduced in \Cref{sec:ws-with-types}. From a concrete perspective this means that identifiers of a reified statement should therefore be assigned a higher level by $\Gamma$ than their subject, object, and predicate. This condition is met if, and only if, the data set is well-stratified. Finally, we remark that finding such levels can be done incrementally while reading the data set or on-the-fly as operations are performed on a well-stratified set; the cost of the former is linear in the size of the base whereas the cost of the second is linear on the number of operations.

The various data levels that can be identified this way can be considered as distinct data ``slices'' that, though linked, can be considered independent data sets and treated accordingly. 
For instance a large, multi-layered data set could be distributed as a whole, with all layers of meta-information or ``reduced'' to its sole data level, \ie, without meta-information.
Since well-stratification can be checked automatically  in a reasonably efficient way and allows the separation of meta-information from ground level information, it addresses the need for such a separation introduced by state of the art data citation methodologies \cite{silvello2015methodology}. Such a concept, however,  does not exist yet in the Semantic Web stack.
The RDF language in fact has several limitations that make well-stratification hard to realise, and here we pinpoint the most evident:
\begin{itemize}
\item Checking well-stratification implies, as shown in Section \ref{sec:ws-with-types}, the presence of a type-checking mechanism that does not exist in RDF.
\item There is a data-level usage of reification (for instance to express sentences like ``Bob says that Alice is kind'') that must not be confused with the labelling of triples for meta-information expression purpose we analysed so far, and RDF does not provide a way to discriminate them.
\item Assigning an identifier to a triple in RDF is not handy due to the bloated syntax of reification.
\end{itemize}
To overcome these limitations we strongly advocate the creation of a new language wherein the concept of well-stratification is a first-class citizen. More specifically, such a language should include in its specifications:
\begin{itemize}
\item a class for meta-information objects, allowing to explicitly state which triples are to be considered meta-information and which information;
\item a \emph{level} property that can be associated to any IRI, serving as explicit annotation of the information level the IRI belongs to;
\item a syntax for quad semantics, \ie, switching from a language of triples (like RDF) to a language of quadruples where triples are considered quadruples with a void fourth element;
\item a more restrictive semantics for the fourth element of the quadruple, allowing to discriminate between reification for meta-information annotation purpose from actual data level usage;
\item a \emph{type system} for data including well-stratified data.
\end{itemize}
Given such a language the actual information would be still expressed in the form of triples, allowing compatibility with the other levels of the Semantic Web, and the meta-information could be handled separately.

\section{Conclusions}
\label{sec:conclusions}
In this paper we briefly outlined a formalisation of data citation over linked data and showed how resolving authorship attribution, subset and version identification are computable in an efficient way as long as the considered data is well-stratified. 
Because of the relevance of this property, we explored how it can be expressed and verified by means of a type system and \adhoc algorithms.
We showed that checking whether a given NG family (which abstracts over the concrete form of data) is well-stratified requires linear time and proposed a constant time solution for checking if an operation performed by any reasoner preserves or breaks this property.

\looseness=-1
With respect to the problem of data citation, the expressive power of OWL and RDF is largely over-abundant and might be harmful since a misuse of their primitives might break the stratification of information and meta-information, thus making resolving citations an undecidable problem.
In our opinion, a more restricted language, designed specifically to grant the stratification of data should be taken into consideration to effectively enable problems such as data citation and provenance assessment to be resolved in practical time, allowing the creation of an effective data trust layer. Attaching meta-information to data published on the Web leveraging such a language might be, in our opinion, Linked Open Data's sixth star, like publishing versioned code is a fundamental quality requirement for Open Source software. The similarity between data meta-information handling and source code versioning is striking since they address similar problems: tracking who and how edited something, identifying subsets of the managed items, and allowing external application or documents to refer to a specific revision.
In our opinion this separation is also consistent with the present development of the Semantic Web stack: OWL itself, thought being a logical extension of RDFS, is not built on the top of RDFS but is rather a distinct language sharing concepts and primitives with RDFS. In a similar way a new language for data meta-information management could be built compatibly with RDF and the Linked Data philosophy without being RDF.

Finally, this work suggests a deeper connection between formal graph models and knowledge management problems encountered by the Digital Libraries and Semantic Web communities. In our opinion a more formal take on a broad range of non trivial knowledge management tasks and practises might provide relevant insights both on the application side and on the theoretical one as suggested by preliminary works in this directions like \eg \cite{wangH13,mansutti:eceast2014,mansutti:dais2014,miculan:csp2014}.

\footnotesize

\end{multicols}
\end{document}